%
%
%
%
%
%

    \documentclass{article}
    \usepackage{amsthm}

\title{\LARGE \bf
Asynchronous opinion dynamics on the $k$-nearest-neighbors graph
}


%
\author{Wilbert Samuel Rossi and Paolo Frasca
\thanks{W.S.\ Rossi and P.\ Frasca are with Department of Applied Mathematics,
        University of Twente, 7500 AE Enschede, The Netherlands 
        {\tt\small w.s.rossi@utwente.nl}. P.\ Frasca is with Univ.\ Grenoble Alpes, CNRS, Inria, Grenoble INP, GIPSA-lab, 
	F-38000 Grenoble, France
        {\tt\small paolo.frasca@gipsa-lab.fr}}%
\thanks{This work has been partly supported by IDEX Universit\'e Grenoble Alpes under C2S2 ``Strategic Research Initiative'' grant. The authors also acknowledge the inspiring conversations with J.M.\ Hendrickx and S.\ Martin.}
}

\usepackage{graphicx,xcolor}
\graphicspath{{./figure/}}

\setlength\fboxsep{0pt}
\setlength\fboxrule{0.0pt}

\newlength{\figtriml}
\setlength{\figtriml}{12mm}
\newlength{\figtrimb}
\setlength{\figtrimb}{66mm}
\newlength{\figtrimr}
\setlength{\figtrimr}{20mm}
\newlength{\figtrimt}
\setlength{\figtrimt}{73mm}
\newlength{\figwidth}
\setlength{\figwidth}{.9\columnwidth} 

\newlength{\figtrimbX}
\setlength{\figtrimbX}{75mm}
\newlength{\figtrimtX}
\setlength{\figtrimtX}{81mm}

\usepackage{amsmath,amssymb}

%
    \newtheorem{theorem}{Theorem}

    \newtheorem{lemma}[theorem]{Lemma}
    \newtheorem{coro}[theorem]{Corollary}
   
   \newtheorem{example}{Example}

\newcommand{\R}{\mathbb{R}}

\renewcommand{\P}{\mathbb{P}}

\newcommand{\xx}{\mathbf{x}}

\newcommand{\1}{\mathbf{1}}

\newcommand{\mx}{{\mu(\xx)}}
\newcommand{\mxz}{{\mu(\xx(0))}}

\newcommand{\mxt}{{\mu(\xx(t))}}
\newcommand{\mxtpo}{{\mu(\xx(t+1))}}

\newcommand{\Mx}{{M(\xx)}}

\newcommand{\Mxt}{{M(\xx(t))}}

\newcommand{\kmo}{{(R)}}
\newcommand{\dkmo}{{(T)}}

\begin{document}

\maketitle
\thispagestyle{empty}
\pagestyle{empty}

\begin{abstract}
This paper is about a new model of opinion dynamics with opinion-dependent connectivity. We assume that agents update their opinions asynchronously and that each agent's new opinion depends on the opinions of the $k$ agents that are closest to it. We show that the resulting dynamics is substantially different from comparable models in the literature, such as bounded-confidence models. We study the equilibria of the dynamics, observing that they are robust to perturbations caused by the introduction of new agents. We also prove that if the number of agents $n$ is smaller than $2k$, the dynamics converge to consensus. This condition is only sufficient.
\end{abstract}

\section{Introduction}

Driven by the evolution of digital communication, there is an increasing interest for mathematical models of opinion dynamics in social networks. A few such models have become popular in the control community, see the surveys  
\cite{PT:2017,PT:2018:part2}. In the perspective of the control community, opinion dynamics distinguish themselves from consensus dynamics because consensus is prevented by some other dynamical feature. In many popular models, this feature is an opinion dependent limitation of the connectivity. This is the case of bounded confidence (BC) models~\cite{krause:2000:discrete,Deffuant:2000}, where social agents influence each other iff their opinions are closer than a threshold. 
This way of defining connectivity, however, seems at odds with several social situations, since it may require an agent to be influenced by an unbounded number of fellow agents. Instead, the number of possible interactions is capped in practice by the limited capability of attention by the individuals. For instance, online social network services are based on recommender systems that select a certain number of news items, those which are closer to the user's presumed tastes. 
However, to the best of our knowledge, this important observation has not been incorporated in any suitable model of opinion dynamics, with the partial exception of~\cite{Piccoli:2017:chapter}. The latter paper compares different models of interaction, including one in which each agent is influenced by a fixed number of neighbors.

In a striking contrast, this observation has been made in the field of biology by a number of quantitative studies about flocking in animal groups (these include both theoretical and experimental works) 
\cite{Ballerini:2008:evidence,Giardina:2008:collective,Frasca:2011:animal-anisotropic,Frasca:2017:birds-wires}.
The importance of this way of defining connectivity has been also captured by graph theorists, who have studied a the properties of what they call $k$-nearest-neighbors graph. For instance, it is known that $k$ must be logarithmic in $n$ to ensure connectivity~\cite{Balister:2005:connectivity} and flocking behavior \cite{Guo:2017:consensus-flocks}.

In this paper, we provide the first analysis of the $k$-nearest-neighbor opinion dynamics. In this analysis, our contribution is threefold: (1) We describe the equilibria of the dynamics, distinguishing a special type of {\em clustered equilibria} that are constituted of separate clusters; (2) We discuss the robustness of clustered equilibria to perturbations consisting in the addition of new agents; (3) We provide a proof of convergence for small groups, that is, groups such that $n<2k$.
%

Our work differs from~\cite{Piccoli:2017:chapter} in several aspects. As per the model, the dynamical model in~\cite{Piccoli:2017:chapter} is synchronous and continuous-time, whereas ours is asynchronous and discrete-time. As per the analysis, \cite{Piccoli:2017:chapter} focuses on the equilibria and their properties (for instance, the distribution of their clusters' sizes) are studied by extensive simulations, whereas we study the dynamical properties (robustness to perturbations, convergence) by a mix of simulations and analytical results. 
Our robustness analysis is based on the approach taken by Blondel, Hendrickx and Tsitsiklis for BC models~\cite{VDB-JMH-JNT:09}.
Our convergence result is inspired by classical proofs of convergence for randomized consensus dynamics~\cite[Chapter~3]{FF-PF:17}, but its interest and difficulty originate from the lack of reciprocity in the interactions:  this feature clearly distinguishes our model from bounded confidence models, where interactions are reciprocal as long as the interaction thresholds are equal for all agents~\cite{krause:2000:discrete,Piccoli:2017:chapter,VDB-JMH-JNT:09a,AM-FB:11f,CC-FF-PT:12,FC-PF:11}.



\section{The dynamical model}

%
Let $n$ and $k$ be two integers with
$$1 \leq k \leq n,$$ 
and  let $V = \{1,\ldots,n\}$ be the set of agents.  
Each agent is endowed with a scalar opinion $x_i \in \R$, to be updated asynchronously. 
The update law
\begin{equation} \label{eq:model}
\xx^+ = f(\xx,i) 
\end{equation}
goes as follows. An agent $i$ is selected from $V$; the elements of $V$ are ordered by increasing values of $|x_j-x_i|$; then, the first $k$ elements of the list (i.e. those with  smallest distance from $i$) form the set $N_i$ of current neighbors of $i$. 
Should a tie between two or more agents arise, priority is given to agents with lower index.
Agent $i$ may but not necessarily does belong to $N_i$. 
Once $N_i$ is determined, agent $i$ updates his opinion $x_i$  to
$$x_i^+ =  \frac{1}{k} \sum_{j\in N_i} x_j\,,$$
while the remaining agents do not change their opinions
$$x_{j}^+ = x_{j}\quad \text{for every }j\neq i\,.$$


We show a couple of simulations to illustrate the possible behaviours of the model, see Figure~\ref{fig:sim-typ} and \ref{fig:sim-non-clust}. For these simulations we set $n=20$, $k=5$ and choose the initial opinion of every agent uniformly at random in $[0,1]$. At every step, we choose from $V$ the node that updates opinion, independently and uniformly at random. The simulation of Figure~\ref{fig:sim-typ} shows a typical outcome: the agents form two distinct groups (of 10 agents each) with homogeneous opinions; for every agent, his neighbors at time $t=1000$ have almost the same opinion. 
This last observation does not hold in the simulation of Figure~\ref{fig:sim-non-clust}: the two pairs of agents that at time $t=1000$ have opinion about 0.6 and about 0.7, respectively, have neighbors with different opinions. 
These distinct behaviors lead us to distinguish different kinds of equilibria: this will be the topic of the next section.

\begin{figure} 	
\centering 	
\fbox{%
\includegraphics[trim={\figtriml} {\figtrimb} {\figtrimr} {\figtrimt}, 	clip, width={\figwidth}, keepaspectratio=true]{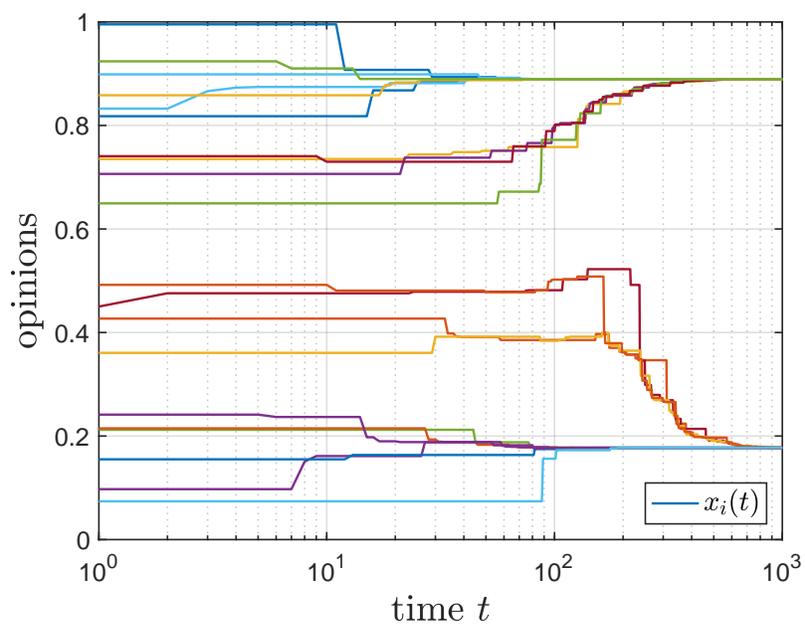}%
	}
 	\caption{\label{fig:sim-typ} Simulation of the model \eqref{eq:model} with $n=20$, $k=50$, initial opinions chosen uniformly at random in $[0,1]$ and update sequence chosen uniformly at random. The plot contains a typical trajectory that converges to a clustered equilibrium.  }
\end{figure}

\begin{figure} 	
\centering 	
\fbox{%
\includegraphics[trim={\figtriml} {\figtrimb} {\figtrimr} {\figtrimt}, 	clip, width={\figwidth}, keepaspectratio=true]{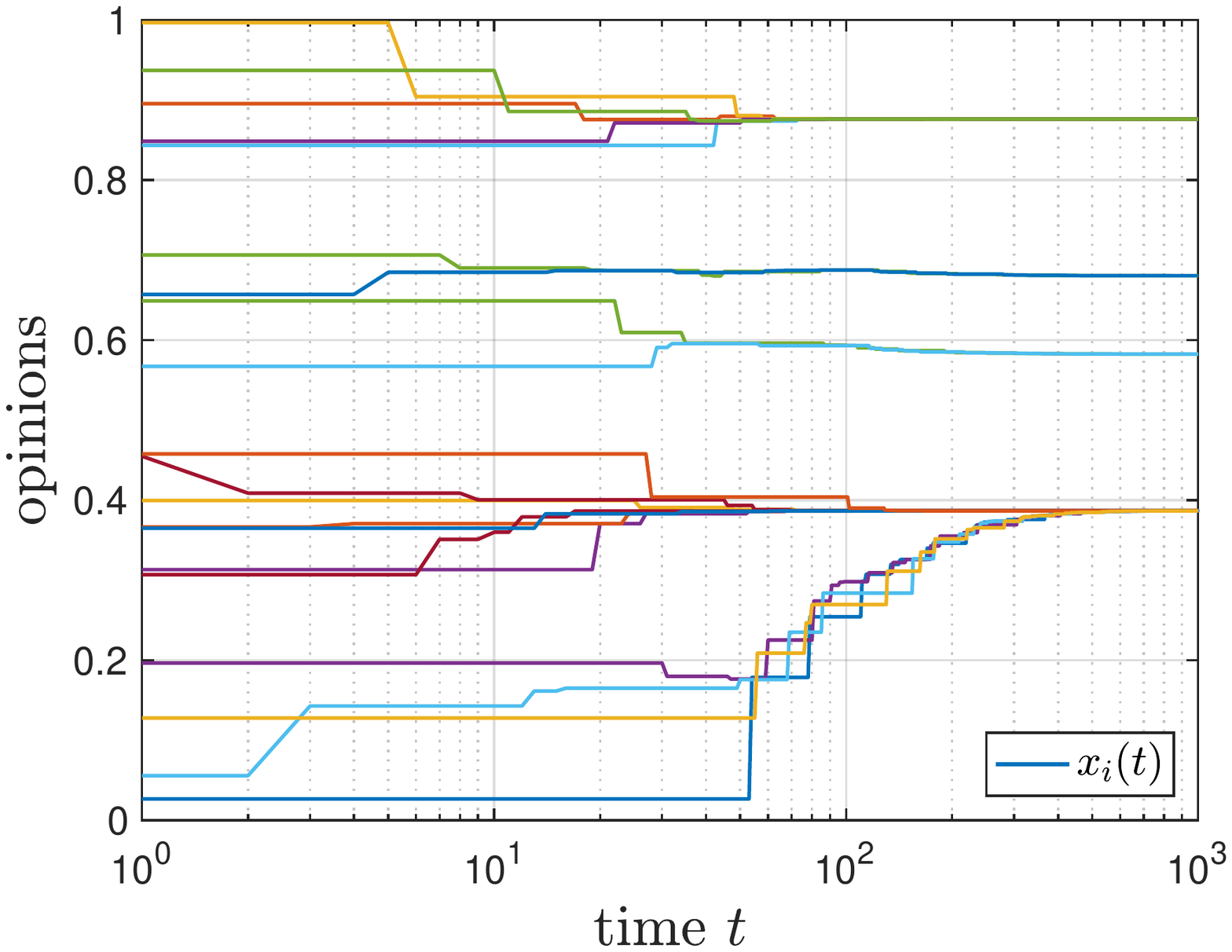}%
	}
 	\caption{\label{fig:sim-non-clust} Simulation of the model \eqref{eq:model} with $n=20$, $k=50$, initial opinions chosen uniformly at random in $[0,1]$ and update sequence chosen uniformly at random. The plot contains a less common trajectory that converges to a non-clustered equilibrium. }
\end{figure}

\section{Equilibria}\label{sec:equilibria}
In this section we discuss some properties of the equilibria of system~\eqref{eq:model}. 
Motivated by the simulations, we introduce the following terminology.
%
Given a configuration $\xx \in \R^n$, the directed graph that represents the possible interactions (i.e. the opinion dependancies for any possible selection of the node to be updated) is 
$$G(\xx) = (V, E(\xx)) \quad \text{with}\quad E(\xx) = \bigcup_{i\in V}  \{ (i,j), j\in N_i\}\,, $$
where $N_i$ is the set of neighbors of $i$, should $i$ be selected to update his opinion. Clearly, if $k=n$ the graph $G(\xx) = (V, V\times V)$ is complete. 
A configuration $\xx \in \R^n$ is an \emph{equilibrium} for the asynchronous dynamics if 
$$\xx = f(\xx,i) \quad \text{for every } i\,. $$
If $k=1$, then $G(\xx)$ contains only links between nodes with the same opinion: 
in this trivial case, every configuration is an equilibrium because agents cannot change opinion. 

A configuration $\xx$ is called \emph{clustered} if 
$$\xx_{N_i} = x_i \1_{N_i} \quad \text{for every } i\,,$$
that is, if for every node all of his neighbors have the same opinion.
Furthermore, a clustered configuration $\xx = c\1$ for some $c\in \R$ is called \textit{consensus}.

%

%
%



It is immediate to see that clustered configurations are equilibria.
However, there exist equilibria that are not clustered. It is possible to obtain a simple counterexample with $n=7$ and $k=3$ and exploiting the tie break rule. Consider any configuration $\xx \in \R^7$ of the form 
$$\xx_{\{1,3,5\}} = \alpha\,\1_{\{1,3,5\}}\,,\quad\xx_{\{2,4,6\}} = \beta\,\1_{\{2,4,6\}}\,,\quad x_7 = \tfrac{\alpha+\beta}2\,,$$ 
where $\alpha,\beta\in \R$ and $\alpha<\beta$.
The above is an equilibrium even if $\xx_{N_7}=\xx_{\{1,2,7\}} \neq \frac12(\alpha+\beta)\1_{\{1,2,7\}}$. 

The tie breaking rule is not central for the existence of non-clustered equilibria, as one can see in the following example inspired by Figure~\ref{fig:sim-non-clust}. 
\begin{example}
Consider $\xx \in \R^{20}$ with 
\begin{align*}
&\xx_{\{1,2,\ldots,11\}} = \alpha\,\1_{\{1,2,\ldots,11\}}\,, \\
&x_{12} = x_{13} = \tfrac{3\alpha+2\beta}5\,,\\ 
&x_{14} = x_{15} = \tfrac{2\alpha+3\beta}5\,,\\
&\xx_{\{16,17,\ldots,20\}} = \beta\,\1_{\{16,17,\ldots,20\}}\,,
\end{align*}
where $\alpha,\beta\in \R$ and $\alpha < \beta$.
For instance, the neighbors of agent 12 are $N_{12} = \{1,12,13,14,15\}$ because
\begin{align*}
&|x_{12}-x_{12}| = |x_{13}-x_{12}| =0\,, \\
&|x_{12}-x_{14}| = |x_{12}-x_{15}| =\tfrac{1}5(\beta-\alpha)\,, \\
&|x_{12}-x_1| = \tfrac{2}5(\beta-\alpha)\,, 
\end{align*}
while the remaining agents are at distance $\tfrac{2}5(\beta-\alpha)$ or larger. 
Such configuration is an equilibrium with $\xx_{N_{12}} \neq x_{12}\,\1_{N_{12}}$.
\end{example}\smallskip

A simple analysis shows that clustered configuration are those in which the agents form \textit{clusters} of at least $k$ participants with the same opinion.
To make this claim formal, let $V_i = \{j:x_j = x_i\}$ be the set of nodes that share the same opinion of $i$.
\begin{lemma}\label{lem:cluster-size}
A configuration is clustered if and only if $|V_i| \ge k $ for every $i$.
\end{lemma}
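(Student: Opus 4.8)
The plan is to prove the two implications separately, the crucial observation in both cases being that, when agent $i$ is selected, the agents at distance exactly zero from $i$ are precisely the elements of $V_i$ (note $i \in V_i$, so $|V_i|\ge 1$), while every other agent lies at strictly positive distance. Since $N_i$ is obtained by ordering the agents by increasing $|x_j - x_i|$ and keeping the first $k$, whether $N_i$ is forced to reach outside $V_i$ is governed entirely by the cardinality of $V_i$, and not by the particular tie-break rule among the zero-distance agents.

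For the ``if'' direction, suppose $|V_i| \ge k$ for every $i$ and fix an arbitrary $i$. There are at least $k$ agents at distance $0$ from $i$, namely those in $V_i$; since distance $0$ is the minimum possible value, the first $k$ entries of the ordered list are all at distance $0$ from $i$, regardless of how ties among them are broken. Hence $N_i \subseteq V_i$, so $\xx_{N_i} = x_i \1_{N_i}$. As $i$ was arbitrary, $\xx$ is clustered.

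For the ``only if'' direction I would argue by contraposition. Assume $|V_i| < k$ for some $i$. When $i$ is selected, only the $|V_i| < k$ agents of $V_i$ are at distance $0$ from $i$, so at least one of the first $k$ positions in the list — hence at least one element of $N_i$ — is occupied by an agent $j$ with $|x_j - x_i| > 0$, i.e.\ $x_j \ne x_i$. Then $\xx_{N_i} \ne x_i \1_{N_i}$, so $\xx$ is not clustered.

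I do not expect a substantial obstacle here; the only care needed is the bookkeeping around the tie-break rule and the fact that $i\in V_i$ always, which together guarantee that the set of zero-distance agents coincides exactly with $V_i$, so that comparing $|V_i|$ with $k$ is precisely what decides membership of $N_i$ in $V_i$.
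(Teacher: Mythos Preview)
Your proof is correct and matches the paper's argument essentially step for step: both directions hinge on the observation that the agents at distance zero from $i$ are exactly $V_i$, so $N_i\subseteq V_i$ holds precisely when $|V_i|\ge k$. Your additional remarks about the tie-break rule being irrelevant are accurate and slightly more explicit than the paper, but the substance is the same.
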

\begin{proof}
By definition, in a clustered configuration $N_i \subseteq V_i$ for every $i$.
Assume $|V_i| \ge k $ for every $i$. For any $i$ there are at least $k$ nodes $j$ (including $i$) with $x_j = x_i$: such nodes have zero distance from $i$ and hence $N_i \subseteq V_i$. 
This holds for every $i$ so the configuration is clustered.
On the other hand, assume that exists $i$ with $|V_i| \leq k-1 $. The set $N_i$ must contain a node $j$ with $x_j\neq x_i$ so not in $V_i$, violating the definition of clustered configuration. 
\end{proof}


From this result, it follows that a clustered configuration allows up to $$\left\lfloor \frac{n}{k} \right\rfloor $$ 
distinct sets $V_i$ (and this bound is tight). For the special case of consensus, this claim becomes the following corollary.
\begin{coro}
Consensus is the only possible clustered configuration if and only if $$n < 2k\,.$$
\end{coro}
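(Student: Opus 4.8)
The plan is to derive the statement directly from Lemma~\ref{lem:cluster-size} by a short counting argument, treating the two implications separately.

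For the ``if'' direction, I would assume $n<2k$ and let $\xx$ be an arbitrary clustered configuration, aiming to show $\xx$ is consensus. Suppose it were not: then there are agents $i,j$ with $x_i\neq x_j$, so the sets $V_i$ and $V_j$ are disjoint. By Lemma~\ref{lem:cluster-size} each has at least $k$ elements, hence $n\ge|V_i|+|V_j|\ge 2k$, contradicting $n<2k$. Therefore all agents share the same opinion and $\xx=c\1$ for some $c\in\R$.

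For the ``only if'' direction, I would argue by contraposition: assuming $n\ge 2k$, I exhibit a clustered configuration that is not consensus. Choose any $\alpha\neq\beta$ in $\R$ and set $x_i=\alpha$ for $i\in\{1,\dots,k\}$ and $x_i=\beta$ for $i\in\{k+1,\dots,n\}$; the second block has $n-k\ge k$ elements, so $|V_i|\ge k$ for every $i$, and by Lemma~\ref{lem:cluster-size} this configuration is clustered. Since it is evidently not of the form $c\1$, consensus is not the only clustered configuration.

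I do not anticipate any real obstacle here: once Lemma~\ref{lem:cluster-size} is in hand, both directions are essentially one line each. The only points deserving a word of care are that distinct opinions yield disjoint $V_i$, which is precisely what converts the ``at least $k$ agents per cluster'' bound into the threshold $2k$, and, in the construction, the verification that the residual block of $n-k$ agents is itself a cluster of admissible size.
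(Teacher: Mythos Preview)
Your proposal is correct and follows exactly the route the paper intends: the corollary is stated immediately after Lemma~\ref{lem:cluster-size} as a direct consequence of the observation that a clustered configuration admits at most $\lfloor n/k\rfloor$ distinct opinion classes (with the bound tight), and your two-direction counting argument is precisely the unpacking of that remark. There is nothing to add or change.
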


\section{ROBUSTNESS OF THE EQUILIBRIA}


The clustered equilibria of the dynamics described above have interesting robustness properties regarding the addition of new nodes or the removal of nodes. The model shows different behavior with respect to a standard 
Asynchronous Bounded Confidence (ABC) model.
In this section, we briefly introduce for comparison the ABC model; then we provide a few simulations to motivate the following discussion of the robustness properties.

\subsection{ABC model}
Given a fixed \textit{range of confidence} $d>0$, we introduce the Asynchronous Bounded Confidence (ABC) update law 
\begin{equation}\label{eq:ABC-model}
\xx^+ = f_{ABC}(\xx,i)\,. \end{equation}
where $i$ is the agent that updates his opinion.
The neighborhood of $i$ is
$N^{ABC}_i = \{j: |x_j-x_i|\le d \}$
and always contains $i$ itself. The new opinion of agent $i$ is 
$$x_i^+ =  \frac{1}{|N^{ABC}_i|} \sum_{j\in N^{ABC}_i} x_j\,,$$
while the remaining agents do do not change opinion
$$x_{j}^+ = x_{j}\quad \text{for every }j\neq i\,.$$

\subsection{Simulations}

\begin{figure} 	
\centering 	
\fbox{%
\includegraphics[trim={\figtriml} {\figtrimbX} {\figtrimr} {\figtrimtX}, clip, width={\figwidth}, keepaspectratio=true]{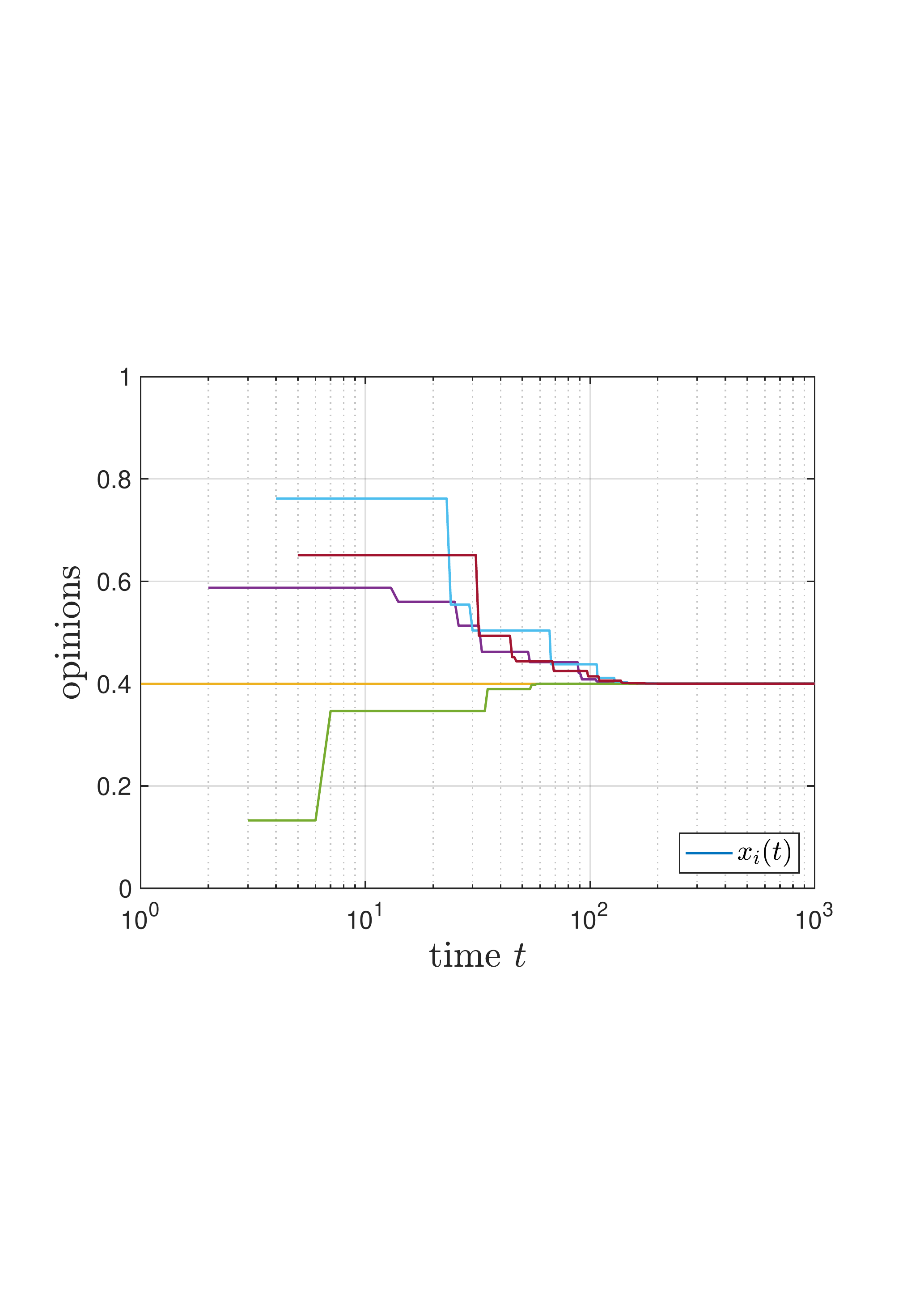}%
	}

\fbox{%
\includegraphics[trim={\figtriml} {\figtrimbX} {\figtrimr} {\figtrimtX}, clip, width={\figwidth}, keepaspectratio=true]{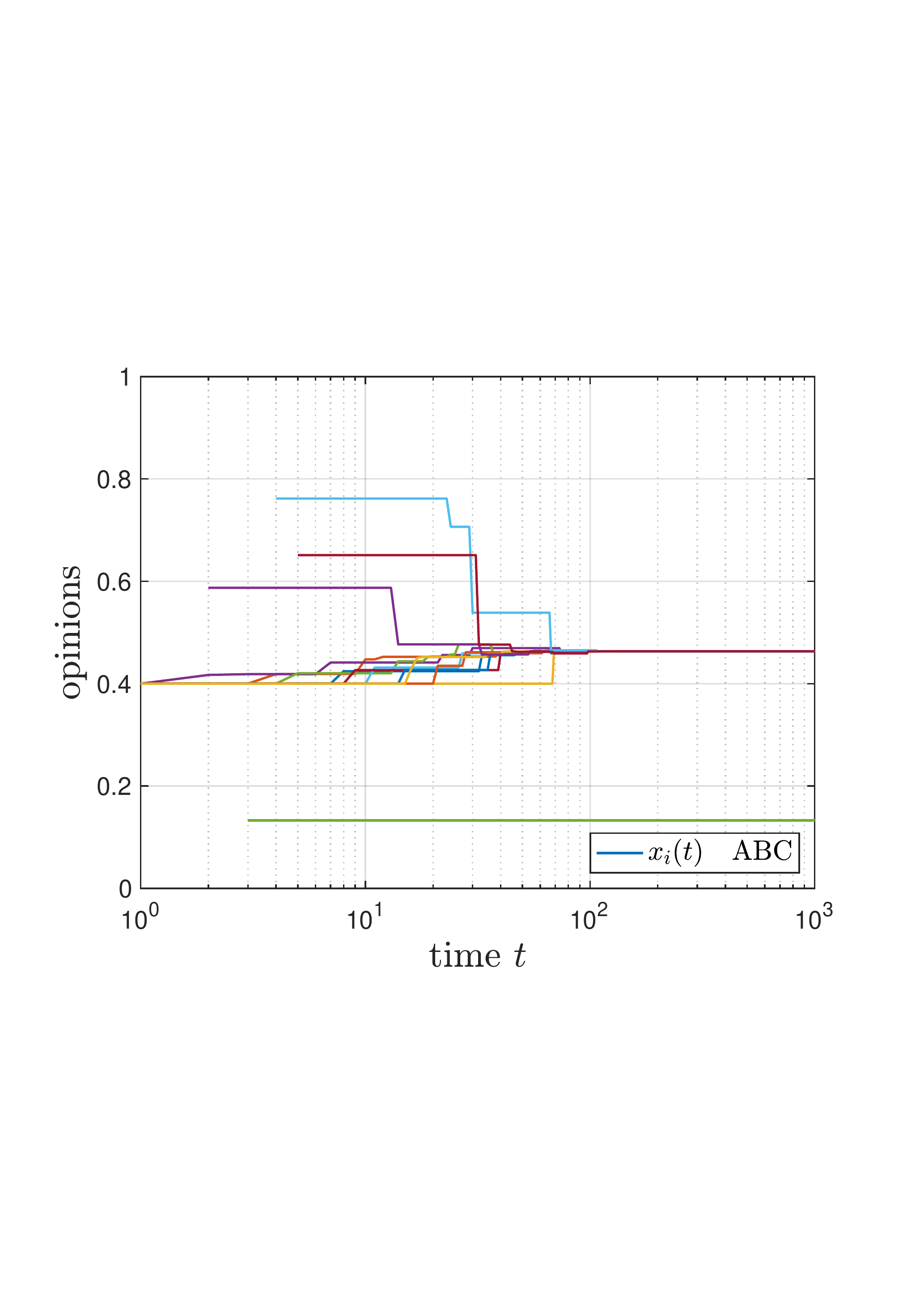}%
	}
 	\caption{\label{fig:robu-add-4} The addition of four new nodes to a consensus configuration with ten nodes. Upper plot: the trajectory of the model \eqref{eq:model} with $k=5$. Lower plot: the trajectory of the model \eqref{eq:ABC-model} with $d=0.25$. The same initial conditions and update order are used.  }
\end{figure}

We present a simulation to show the difference between model~\eqref{eq:model} and model~\eqref{eq:ABC-model} when a few agents are added to a consensus configuration (which is an equilibrium for both models).
We set $k=5$ for model \eqref{eq:model} and $d=0.25$ for model \eqref{eq:ABC-model}. We start with 10 agents sharing opinion $0.4$; at steps $t=2,3,4,5$ we add a new agent, with opinion chosen uniformly at random in $[0,1]$.
We select the agent that updates his opinion among those present at that time, independently and uniformly at random: the same selection is used in both models. 
Figure~\ref{fig:robu-add-4} contains the plots of the simulation. 
The upper plot regards the dynamics of model \eqref{eq:model}: the four new agents converge to the consensus opinion, which does not change; they are too few to form a new cluster. 
The lower plot contains the dynamics of the model \eqref{eq:ABC-model}: the consensus configuration is not preserved and the agent added at step $t=3$ remains isolated during the dynamics and keeps his opinion. The other three new agents join the original ten; this group of 13 agents converge to the same opinion which however is different from the original consensus value.

\subsection{Robustness of the equilibria}
We now provide a general discussion that explains the observations from Figure~\ref{fig:robu-add-4}.
Let $n,k$ with $1\le k\le n$ be given and consider a clustered equilibria $\xx\in \R^n$ of the model \eqref{eq:model}. 
We first discuss the {\em addition} of a new agent with opinion $x_{n+1} = \alpha$ to the configuration $\xx$, that becomes  $[\xx;\alpha]\in \R^{n'}$ with $n'=n+1$.
Before the addition of the new node, clusters have to contain at least $k$ agents. 
This fact remains true after the addition and we have that
$$f([\xx;\alpha],i)  = [\xx;\alpha']$$
for every $i$, meaning that the original (clustered) portion of the configuration $[\xx;\alpha]$ remains unperturbed. 
For a generic value of $\alpha$ 
the limit of the dynamics has the same cluster locations of $\xx$, with one of the clusters getting a new member. For some specific values, it may happen that the configuration $[\xx;\alpha]$ is a non-clustered equilibrium. 
%
In any case, none of the original agents changes opinion.  
Instead, in the metric ABC model \eqref{eq:ABC-model} with uniform visibility radius $d$, either the new agent is further apart from the original agents and nothing happens or he falls within the visibility radius of a cluster of agents. In the latter case both the new agents and the agents in the cluster change opinions, converging to an intermediate value.

Assuming $n$ sufficiently large, the {\em removal} of an agent from a clustered equilibrium presents interesting differences too. 
In the metric ABC model \eqref{eq:ABC-model} the removal of an agent does not trigger any dynamics in the remaining agents. 
In model \eqref{eq:model}, if the agent is removed from a cluster with $k+1$ agents or more, nothing happens. But if the agent is removed from a cluster with $k$ agents, the new configuration is not an equilibrium anymore and the remaining nodes from that group will evolve towards some new equilibrium.

\addtolength{\textheight}{-3cm}   

%
%

\section{Convergence to consensus}
In this section we show that process~\eqref{eq:model} converges to a consensus, provided $n<2k$ and the choice of the agent that updates his opinion at time $t$ is an i.i.d. uniform random variable over $V$. 
We recall from Section~\ref{sec:equilibria} that the consensus is the unique clustered equilibrium for $n<2k$.

For $t\ge 0$, let $\xx(t) \in \R^n$ be the sequence of opinion vectors and $I(t) \in V$ a sequence of agents. 
Given an initial configuration $\xx(0) = \xx^0$, we consider the dynamics
\begin{equation}\label{eq:dyn-special}
\xx(t+1) = f(\xx(t) , I(t)) \quad \text{for every } t\ge 0\,,
\end{equation}
where $I(t)$ is the agent that updates his opinion at time $t$. 
%


%
%

We introduce two functions $\mu, M:\R^n \to V$ that, given an opinion vector $\xx$, return respectively the index of the smallest and largest components, with ties sorted
$$\mu(\xx) = \min( \arg \min_i x_i )\,, \qquad M(\xx) = \min( \arg \max_i x_i )\,.$$
The outer $\min$ sorts possible ties; note that $M(\xx) = \mu(-\xx)$.

In the following two lemmas we prove the properties of the dynamics in which the agent with smallest opinion is the one that updates his opinion. 
\begin{lemma}\label{lem:prelim-mu}
Given $n,k$ with $1\leq k\leq n$ and an initial configuration $\xx^0\in \R^n$ consider dynamics~\eqref{eq:dyn-special} with $I(t) = \mu(\xx(t))$ and the scalar sequence $y(t) := \max_{i\in N_{\mu(\xx(t))}} x_i(t)$. 
Then:
\begin{itemize}
\item the set sequence $N_{\mu(\xx(t))} $ and the scalar sequence $y(t)$ are constant;
\item for every $i \in N_{\mu(\xx(0))} $ the sequences $x_i(t)$ are non-decreasing   and satisfy $x_i(t) \leq y(0)$; 
\item  for every $i \notin N_{\mu(\xx(0))} $ the sequences $x_i(t)$ are constant.
\end{itemize}
\end{lemma}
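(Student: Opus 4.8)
The plan is to show that the set $S := N_{\mu(\xx(0))}$ is \emph{frozen} along the whole trajectory --- meaning $N_{\mu(\xx(t))} = S$ for every $t$ and only the agents of $S$ ever change opinion --- from which the three bullets follow quickly. The starting point is structural: since $x_{\mu(\xx(0))}(0) = \min_i x_i(0)$, the agents closest to $\mu(\xx(0))$ are exactly those with the smallest opinions, so $S$ consists of the $k$ agents with smallest opinion in $\xx(0)$, ties broken by lower index. Hence $y(0)$ equals the $k$-th smallest opinion value, every $j\notin S$ has $x_j(0)\ge y(0)$, and $\mu(\xx(0))\in S$ (an agent with an extreme opinion is always among its own nearest neighbours). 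It is convenient to split $S = S^-\cup S^0$, where $S^- := \{i\in S: x_i(0) < y(0)\}$ and $S^0 := S\setminus S^-$, so that $S^0$ is exactly the set of the $k-|S^-|$ lowest-indexed agents having opinion equal to $y(0)$.

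I would then prove by induction on $t$ that: (i) $x_j(t) = x_j(0)$ for all $j\notin S$; (ii) $\mu(\xx(t))\in S$ and $N_{\mu(\xx(t))} = S$; and (iii) for $i\in S$ one has $x_i(t)\le y(0)$, with $x_i(t) < y(0)$ iff $i\in S^-$ and $x_i(t) = y(0)$ iff $i\in S^0$. The base case is the previous paragraph. For the inductive step, assume the statement at time $t$. By (ii) the unique updating agent is $\mu(\xx(t))\in S$, and its new value is $\tfrac{1}{k}\sum_{j\in S} x_j(t)$, an average lying between $x_{\mu(\xx(t))}(t) = \min_i x_i(t)$ and $\max_{j\in S}x_j(t)\le y(0)$; this yields (i) and the bound in (iii) at $t+1$. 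For the refined part of (iii): if $S^-\neq\emptyset$ the current minimum is strictly below $y(0)$ and is attained inside $S^-$, so $\mu(\xx(t))\in S^-$ and its updated value, being an average of $k$ numbers all $\le y(0)$ at least one of which is $<y(0)$, remains $<y(0)$; meanwhile any agent of $S^0$ (opinion $y(0)$) is never the minimum, hence never updates; so the characterisations of $S^-$ and $S^0$ persist. To re-establish (ii) at $t+1$ I distinguish two cases. If $S^- = \emptyset$, then all opinions remain $\ge y(0)$ and the set of agents with opinion $y(0)$ never changes, so $\mu(\xx(t+1)) = \mu(\xx(0))$ and its $k$ nearest neighbours are the $k$ lowest-indexed agents at $y(0)$, i.e.\ $S$. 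If $S^-\neq\emptyset$, the minimum at $t+1$ is $<y(0)$ and lies in $S^-$, while the $k$ agents with smallest opinion are $S^-$ together with the $k-|S^-|$ lowest-indexed agents at opinion $y(0)$, which is $S^0$ because the set of agents at opinion $y(0)$ is unchanged. In both cases $N_{\mu(\xx(t+1))} = S$.

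With the invariant established, the conclusions are immediate. Bullet three is (i). For bullet one, $N_{\mu(\xx(t))} = S$ is part of (ii), and $y(t) = \max_{i\in S}x_i(t)$ is constant in $t$: at each step the only agent of $S$ that moves is the current global minimum $\mu(\xx(t))$, so dropping it does not lower the maximum over $S$, and its new value, being an average of the opinions of $S$, does not exceed that maximum --- hence $\max_{i\in S}x_i(\cdot)$ is unchanged and equals $y(0)$. For bullet two, if $i\in S$ then $x_i(t)$ changes only when $i = \mu(\xx(t))$, in which case $x_i(t+1) = \tfrac{1}{k}\sum_{j\in S}x_j(t)\ge x_i(t)$ because $x_i(t)$ is then the global minimum; thus $x_i(\cdot)$ is non-decreasing, and $x_i(t)\le y(0)$ by (iii).

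The step I expect to be the main obstacle is the bookkeeping around the tie-breaking rule at the boundary of $S$: one must verify that the motion occurring inside $S$ never reshuffles which agents are the $k$ closest to the current minimum. Isolating the frozen sets $S^-$ and $S^0$ is what makes this transparent --- the ``interior'' $S^-$ moves but stays strictly below $y(0)$, while the ``boundary'' $S^0$ and everything outside $S$ is literally immobile, so the list of agents sorted by distance from $\mu(\xx(t))$ keeps the same first $k$ entries (as a set) at all times.
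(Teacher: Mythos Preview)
Your proof is correct and follows essentially the same inductive strategy as the paper: both arguments hinge on the observation that the level sets $\{i:x_i<y\}$ and $\{i:x_i=y\}$ are preserved by the update, which forces $N_{\mu(\xx(t))}$ and $y(t)$ to remain constant. Your version is more explicit than the paper's --- in particular your decomposition $S=S^-\cup S^0$ makes the tie-breaking bookkeeping at the boundary of $S$ fully transparent, whereas the paper simply asserts that the preservation of these level sets ``implies'' $N_{\mu(\xx(t+1))}=N_{\mu(\xx(t))}$ without spelling out why the lowest-indexed agents at level $y$ are selected again.
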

\begin{proof} The proof goes by induction. First, consider the trivial case with $x_{\mu(\xx(t))}(t) = y(t)$. This condition means $x_i(t) = y(t)$ for every $i\in N_{\mu(\xx(t))}$ and thus $x_{\mu(\xx(t))}(t+1) = x_{\mu(\xx(t))}(t)$ so  everything remains unchanged. 

Next, consider the case with $x_{\mu(\xx(t))}(t) < y(t)$. We have
$$x_{\mu(\xx(t))}(t+1) = \frac1k \sum_{j\in N_{\mu(\xx(t))}} x_j(t) \in \big(x_{\mu(\xx(t))}(t) , y(t) \big) \,.$$
Therefore,  
$$ \{i: x_i(t) < y(t) \} = \{i: x_i(t+1) < y(t) \} $$
and 
$$ \{i: x_i(t) = y(t) \} = \{i: x_i(t+1) = y(t) \}\,. $$
Moreover, the cardinality of the set $\{i: x_i(t) < y(t) \}$ is strictly smaller than $k$. This implies that $N_\mxtpo = N_\mxt$ and also $y(t+1) = y(t)$. 
The claims follow by induction and by observing that only the agents $i\in N_\mxz$ can update their opinions at some time $t\ge0$ and the updated value $x_i(t+1)$ belongs to $[x_i(t),y(t)]$.
\end{proof}

\begin{lemma}\label{lem:shrink-mu}
Given $n,k$ with $1\leq k\leq n$ and an initial configuration $\xx^0\in \R^n$ consider the dynamics \eqref{eq:dyn-special} with $I(t) = \mu(\xx(t))$ and the scalar sequence $y(t) = \max_{i\in N_{\mu(\xx(t))}} x_i(t)$. 
Then
$$y(k\!-\!1) - \min_i x_i(k\!-\!1) \leq \left( 1-\tfrac1k \right)\!\big(y(0) - \min_i x_i(0) \big)$$
\end{lemma}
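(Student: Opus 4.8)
The plan is to exploit everything that Lemma~\ref{lem:prelim-mu} already gives us for the trajectory $I(t)=\mu(\xx(t))$: the neighbour set $N:=N_{\mu(\xx(0))}$ is constant, the scalar $y(t)$ is constant and equal to $y:=y(0)=\max_{i\in N}x_i(0)$, each component $x_i(t)$ with $i\in N$ is non-decreasing and bounded above by $y$, and the components with $i\notin N$ are frozen. A first observation is that the global minimum is always realised inside $N$: by the tie-breaking rule the updating agent $\mu(\xx(t))$ is at distance zero from itself and has the smallest index among the agents with minimal opinion, hence $\mu(\xx(t))\in N_{\mu(\xx(t))}=N$; therefore $m(t):=\min_i x_i(t)=\min_{i\in N}x_i(t)$, and from the monotonicity of the $N$-block together with the fact that $m(0)$ is the global minimum one gets $m(t)\ge m(0)$ for all $t$. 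Since also $y(k-1)=y$, the claimed inequality is equivalent to $m(k-1)\ge m(0)+\tfrac1k(y-m(0))$, so it suffices to prove that the minimum increases by at least $\delta:=\tfrac1k(y-m(0))$ within $k-1$ steps.

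To do this I would introduce the set $A(t):=\{\,i\in N:\ x_i(t)<m(0)+\delta\,\}$ and show that $|A(t)|$ strictly decreases at every step at which it is non-empty. The bound $|A(0)|\le k-1$ is immediate, because some agent of $N$ attains the value $y=m(0)+k\delta\ge m(0)+\delta$ and so lies outside $A(0)$, while $|N|=k$. For the decrease, if $A(t)\neq\emptyset$ then $m(t)<m(0)+\delta$, so $\mu(\xx(t))\in A(t)$; after the update its opinion becomes $\tfrac1k\sum_{j\in N}x_j(t)$, and since all $k$ of these summands are $\ge m(0)$ while at least one equals $y(t)=y$, this average is at least $\tfrac1k\big((k-1)m(0)+y\big)=m(0)+\delta$. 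Hence the updating agent leaves $A$, no other $N$-component moves, and $|A(t+1)|=|A(t)|-1$; the same averaging bound shows that once $A$ is empty it stays empty. Starting from $|A(0)|\le k-1$ we conclude $A(k-1)=\emptyset$, i.e.\ $x_i(k-1)\ge m(0)+\delta$ for all $i\in N$, hence $m(k-1)\ge m(0)+\delta$, which is exactly the wanted bound. (The case $k=1$ is degenerate and trivial, since then $N=\{\mu(\xx(0))\}$ and $y=m(0)$.)

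The only delicate point is the averaging step, and it is delicate precisely because it needs \emph{both} one-sided invariants to persist along the whole trajectory: the maximum of the $N$-block never dropping below $y$, and the global minimum never dropping below $m(0)$ — which is what Lemma~\ref{lem:prelim-mu} supplies — plus the bookkeeping remark that $\mu(\xx(t))\in N$, so that the minimum is always governed by the $N$-block of size $k$. Once these are in place, the rest is elementary counting: one does not need any estimate on \emph{how far} individual opinions move, only that each update lifts the current minimiser across the fixed threshold $m(0)+\delta$, and that at most $k-1$ agents of the block start below it.
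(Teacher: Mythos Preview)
Your proposal is correct and follows essentially the same approach as the paper: both introduce the threshold $m(0)+\tfrac1k\big(y(0)-m(0)\big)$, track the set of agents below it, and show that the updated minimiser crosses the threshold at each step so that the set empties within $k-1$ iterations. Your set $A(t)$ is restricted to $N$ while the paper's set $S(t)$ ranges over all agents, but since every agent outside $N$ has opinion at least $y(0)$ the two sets coincide; you are also a bit more explicit than the paper about why $\mu(\xx(t))\in N$.
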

\begin{proof} First, compute $x_\mxt(t+1)$ for a generic $t\ge0$. 
We have
\begin{align*}
x_\mxt(t+1) &= {\textstyle \frac1k \sum_{j\in N_\mxt} x_j(t) }\\
&={\textstyle\frac1k \sum_{j\in N_\mxz} x_j(t) }\\
&\ge {\textstyle \frac1k \sum_{j\in N_\mxz} x_j(0)}
\end{align*}
thanks to Lemma~\ref{lem:prelim-mu}. Then,
\begin{align*}
x_\mxt(t+1) &\ge \tfrac{k-1}k x_\mxz(0) + \tfrac1k y(0) \\
&= x_\mxz(0)  + \tfrac1k \big(y(0)-  x_\mxz(0)\big) \,.
\end{align*}
Next, consider the set
$$S(t) = \big\{i: x_i(t) < x_\mxz(0)  + {\textstyle \frac1k} \big(y(0)-  x_\mxz(0)\big) \big\} \,,$$
and observe that either $S(t)=\emptyset$ 
or $|S(t\!+\!1)| = |S(t)|-1$ because $\mxt \notin S(t+1)$. Since the set $S(0)$ contains at most $k-1$ elements, the set $S(k\!-\!1)$ is empty. 
Hence, 
$$x_i(k-1)\ge x_\mxz(0)  + {\textstyle \frac1k} \big(y(0)-  x_\mxz(0)\big)$$ 
for every $i$, a fact that implies  
$$x_{\mu(\xx(k-1))}(k-1)\ge x_\mxz(0)  + {\textstyle \frac1k} \big(y(0)-  x_\mxz(0)\big)\,.$$

Using Lemma~\ref{lem:prelim-mu} we know that $N_\mxt = N_\mxz$ for every $t\ge 0$ and that for every $i$ therein, $x_i(t)\leq y(t) = y(0)$.
Therefore 
\begin{align*}
y(k\!-\!1) - x_{\mu(\xx(k-1))}(k\!-\!1)\le &\,y(0) - x_\mxz(0)\,  \\
&- {\textstyle \frac1k} \big(y(0)-  x_\mxz(0)\big)
\end{align*}
and the thesis follows because $x_\mxt = \min_i x_i(t)$.
\end{proof}

The following lemma follows from Lemma~\ref{lem:prelim-mu} and \ref{lem:shrink-mu} using the property $M(\xx) = \mu(-\xx)$.
\begin{lemma}\label{lem:shrink-prelim-M} 
Given $n,k$ with $1\leq k\leq n$ and an initial configuration $\xx^0\in \R^n$ consider the dynamics \eqref{eq:dyn-special} with $I(t) = M(\xx(t))$ and the scalar sequence $z(t) := \min_{i\in N_{M(\xx(t))}} x_i(t)$. 
Then:
\begin{itemize}
\item the set sequence $N_{M(\xx(t))} $ and the scalar sequence $z(t)$ are constant;
\item for every $i \in N_{M(\xx(0))} $ the sequences $x_i(t)$ are non-increasing   and satisfy $x_i(t) \geq z(0)$; 
\item for every $i \notin N_{M(\xx(0))} $  the sequences $x_i(t)$ are constant.
\end{itemize}
Moreover,
$$ \max_i x(k\!-\!1) - z(k\!-\!1) \leq \left( 1-\tfrac1k \right) \!\big(\max_i x_i(0) - z(0) \big)\,.$$
\end{lemma}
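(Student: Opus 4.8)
The plan is to reduce this statement to Lemmas~\ref{lem:prelim-mu} and~\ref{lem:shrink-mu} through the change of variables $\tilde\xx(t):=-\xx(t)$. The first thing I would verify is that the map $f$ of~\eqref{eq:model} is odd in its first argument, i.e.\ $f(-\xx,i)=-f(\xx,i)$ for every $i$. This rests on two facts: (i) the neighbor set $N_i$ is unchanged under negation, and (ii) the update of component $i$ is negated. For (i), note $|(-x_j)-(-x_i)|=|x_j-x_i|$, so the ordering of agents by distance from $i$ is identical for $\xx$ and $-\xx$; the tie-breaking rule depends only on the indices, which negation does not touch, so $N_i$ is literally the same set in both configurations. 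Fact (ii) is then immediate from $\frac1k\sum_{j\in N_i}(-x_j)=-\frac1k\sum_{j\in N_i}x_j$, while the components $j\neq i$ are negated trivially.

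Next I would combine oddness of $f$ with the identity $M(\xx)=\mu(-\xx)$ already recorded in the text to show that, if $\xx(t)$ solves~\eqref{eq:dyn-special} with $I(t)=M(\xx(t))$, then $\tilde\xx(t)=-\xx(t)$ solves~\eqref{eq:dyn-special} with $I(t)=\mu(\tilde\xx(t))$ and initial condition $\tilde\xx(0)=-\xx^0$. Indeed $\tilde\xx(t+1)=-f(\xx(t),M(\xx(t)))=f(-\xx(t),M(\xx(t)))=f(\tilde\xx(t),\mu(\tilde\xx(t)))$, so $\tilde\xx$ is a bona fide trajectory of the $\mu$-driven dynamics.

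Then I would apply Lemmas~\ref{lem:prelim-mu} and~\ref{lem:shrink-mu} to $\tilde\xx$ and translate back. Writing $y(t):=\max_{i\in N_{\mu(\tilde\xx(t))}}\tilde x_i(t)$, one checks $N_{\mu(\tilde\xx(t))}=N_{M(\xx(t))}$, hence $y(t)=-\min_{i\in N_{M(\xx(t))}}x_i(t)=-z(t)$, while $\min_i\tilde x_i(t)=-\max_i x_i(t)$. Under these substitutions the three bullet points of Lemma~\ref{lem:prelim-mu} for $\tilde\xx$ become verbatim the three bullet points of the present statement for $\xx$: ``non-decreasing'' turns into ``non-increasing'' and the upper bound $y(0)$ turns into the lower bound $z(0)$. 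Likewise the inequality of Lemma~\ref{lem:shrink-mu}, namely $y(k-1)-\min_i\tilde x_i(k-1)\le(1-\tfrac1k)(y(0)-\min_i\tilde x_i(0))$, becomes exactly $\max_i x_i(k-1)-z(k-1)\le(1-\tfrac1k)(\max_i x_i(0)-z(0))$.

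I do not expect any real obstacle here beyond bookkeeping; the one point deserving explicit care is the interaction of the tie-breaking rule with the negation, which is why I would spell out fact (i) above, since a priori one might worry that negating opinions reshuffles the priorities among tied agents — it does not, because priorities are index-based. Everything else is a mechanical translation through the involution $\xx\mapsto-\xx$.
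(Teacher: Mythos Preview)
Your proposal is correct and follows exactly the approach the paper indicates: the paper merely states that the lemma ``follows from Lemma~\ref{lem:prelim-mu} and~\ref{lem:shrink-mu} using the property $M(\xx)=\mu(-\xx)$,'' and you have fleshed out precisely that reduction via the involution $\xx\mapsto-\xx$. Your explicit check that the index-based tie-breaking is preserved under negation is a welcome addition that the paper leaves implicit.
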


\smallskip
The next equivalence will be crucial in the following. 
\begin{lemma}\label{lem:z-y}
Given $n,k$ with $1\leq k\leq n$, consider $\xx \in \R^n$ and define the quantities
$$y := \max_{i\in N_{\mu(\xx)}} x_i \qquad \text{and} \qquad z := \min_{i\in N_{M(\xx)}} x_i\,.$$
Then, $z \leq y$ for every $\xx \in \R^n$ if and only if $n<2k$. 
\end{lemma}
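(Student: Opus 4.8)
The plan is to prove the two implications separately, using nothing about $y$ and $z$ beyond their defining extremal formulas together with the cardinality fact $|N_{\mu(\xx)}| = |N_{M(\xx)}| = k$. Intuitively, since the agent $\mu(\xx)$ has the smallest opinion, its $k$ nearest neighbours are the $k$ agents with the smallest opinions, so $y$ is the $k$-th smallest opinion value; symmetrically $z$ is the $k$-th largest; and $z \le y$ should therefore be equivalent to $n - k + 1 \le k$, i.e.\ to $n < 2k$. I would make this precise as follows.

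For the ``if'' direction, assume $n < 2k$ and suppose, for contradiction, that $z > y$ for some $\xx \in \R^n$. From $y = \max_{i \in N_{\mu(\xx)}} x_i$ we get $N_{\mu(\xx)} \subseteq \{i : x_i \le y\}$, and from $z = \min_{i \in N_{M(\xx)}} x_i > y$ we get $N_{M(\xx)} \subseteq \{i : x_i > y\}$. The sets $\{i : x_i \le y\}$ and $\{i : x_i > y\}$ are disjoint and partition $V$, and each of them contains one of the two $k$-element sets $N_{\mu(\xx)}$ and $N_{M(\xx)}$; hence $n \ge 2k$, a contradiction. This step is routine.

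For the ``only if'' direction I would argue the contrapositive: if $n \ge 2k$, exhibit a configuration with $z > y$. The natural choice is the two-block configuration $x_1 = \dots = x_k = 0$ and $x_{k+1} = \dots = x_n = 1$. Then $\mu(\xx) = 1$ and, all intra-block distances being $0$ and all inter-block distances being $1$, the neighbourhood $N_{\mu(\xx)}$ is exactly $\{1, \dots, k\}$, so $y = 0$. Likewise $M(\xx) = k+1$, and since the one-block $\{k+1, \dots, n\}$ has $n - k \ge k$ elements all at distance $0$ from agent $k+1$, we get $N_{M(\xx)} \subseteq \{k+1, \dots, n\}$ and hence $z = 1 > 0 = y$.

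The only place that calls for a little care is the bookkeeping with the tie-breaking rule in the second part: one must check that ``lower index first'' indeed pulls the whole zero-block into $N_{\mu(\xx)}$ and keeps $N_{M(\xx)}$ inside the one-block. This is immediate once one observes that the construction makes every relevant ordering by distance agree with the block structure, and that $n - k \ge k$ guarantees the one-block alone already supplies $k$ neighbours to agent $k+1$. I do not anticipate any genuine obstacle.
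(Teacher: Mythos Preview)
Your proof is correct and follows essentially the same approach as the paper: both directions are argued via the contrapositive, the ``if'' part by noting that $z>y$ forces two disjoint index sets of size at least $k$, and the ``only if'' part by an explicit two-block construction when $n\ge 2k$. The only cosmetic difference is that the paper phrases the counterexample as a generic ordered vector with $x_k<x_{k+1}$, whereas you take the concrete instance $x_1=\dots=x_k=0$, $x_{k+1}=\dots=x_n=1$; your tie-breaking bookkeeping is fine.
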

\begin{proof} 
We prove the equivalent claim that  $\xx \in \R^n$ with $z > y$ exists if and only if  $n\ge2k$.
Indeed, if $n \ge 2k$ consider the vector $\xx \in \R^n$ such that 
$$x_1\leq x_2 \leq \ldots \leq x_k < x_{k+1}\leq \ldots \leq x_{n-k+1} \leq \ldots \leq x_n $$
where $n-k+1 > k$. The set $N_\mx$ contains the $k$ smallest elements of $\xx$ so $y=x_k$, while the set $N_\Mx$ contains the $k$ largest  elements of $\xx$, so $z=x_{n-k+1}>x_k = y$. 
For the converse, assume that $\xx$ with $z>y$ exists, meaning  
$${\textstyle \left(\max_{i\in N_{\mu(\xx)}} x_i \right) < \left( \min_{i\in N_{M(\xx)}} x_i\right).}$$ 
Both sets $N_\mx$ and $N_{M(\xx)}$ contain $k$ elements, so the sets 
$${\textstyle \{j : x_j \leq \max_{i\in N_{\mu(\xx)}} x_i \}}\text{~ and ~}
{\textstyle\{j : x_j \geq \min_{i\in N_\Mx} x_i \} }$$
 contain at least $k$ elements each. These two sets are disjoint, thus the vector $\xx\in \R^n$ has at least $n\ge 2k$ components. 
\end{proof}

The next lemma describes a ``shrinking sequence''. 
\begin{lemma}\label{lem:special-sequence}
Given $n,k$ with $1\leq k\leq n$ and an initial configuration $\xx^0\in \R^n$ consider the dynamics \eqref{eq:dyn-special} with 
$$I(t) = \left\{\begin{array}{ll} \mxt & \text{ for } t\in \{0,\ldots,k-2\}\\[4pt]
\Mxt & \text{ for } t\in \{k-1,\ldots,2k-3\} \end{array}\right.$$  	 
If $n<2k$ then
$$ \max_i x_i(T) - \min_i x_i(T) \leq \left( 1\!-\!\tfrac1k \right) \!\big( \max_i x_i(0) - \min_i x_i(0) \big)$$
where $T=2k\!-\!2$.
\end{lemma}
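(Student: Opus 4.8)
The plan is to split the $T=2k-2$ steps into the two phases that define $I(t)$, analyze each phase with the lemmas already established, and glue the halves together using Lemma~\ref{lem:z-y}. Throughout I will write $a_0 := \min_i x_i(0)$, $b_0 := \max_i x_i(0)$ and $y_0 := \max_{i\in N_{\mu(\xx(0))}} x_i(0)$, so that $a_0 \le y_0 \le b_0$; if $\xx(0)$ is already a consensus the inequality is trivial, and the degenerate ``trivial case'' branches of the updates are already absorbed into Lemmas~\ref{lem:prelim-mu} and~\ref{lem:shrink-prelim-M}.

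For the first phase, $t\in\{0,\dots,k-2\}$, the update rule is exactly $k-1$ steps of the pure $\mu$-dynamics, so Lemmas~\ref{lem:prelim-mu} and~\ref{lem:shrink-mu} apply directly. From Lemma~\ref{lem:prelim-mu} I get that the components outside $N_{\mu(\xx(0))}$ remain frozen while those inside never exceed $y_0\le b_0$; hence $\max_i x_i(k-1) = b_0$, and moreover $N_{\mu(\xx(k-1))} = N_{\mu(\xx(0))}$ with $\max_{i\in N_{\mu(\xx(k-1))}} x_i(k-1) = y_0$. From Lemma~\ref{lem:shrink-mu} I get $y_0 - \min_i x_i(k-1) \le (1-\tfrac1k)(y_0-a_0)$, which I will use in the rearranged form $y_0 \le k\,\min_i x_i(k-1) - (k-1)\,a_0$.

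The gluing step is where the hypothesis enters, and I expect it to be the main obstacle. Setting $z := \min_{i\in N_{M(\xx(k-1))}} x_i(k-1)$ and applying Lemma~\ref{lem:z-y} to the configuration $\xx(k-1)$, the assumption $n<2k$ forces $z \le \max_{i\in N_{\mu(\xx(k-1))}} x_i(k-1) = y_0$, and therefore $z \le k\,\min_i x_i(k-1) - (k-1)\,a_0$. This is the only place where $n<2k$ is used: without it, the minimum reached after phase one and the quantity $z$ that governs the contraction in phase two need not be comparable, and the two halves cannot be chained.

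Finally, for the second phase, $t\in\{k-1,\dots,2k-3\}$, the update rule is $k-1$ steps of the pure $M$-dynamics started from $\xx(k-1)$, so Lemma~\ref{lem:shrink-prelim-M} applies with $\xx(k-1)$ playing the role of the initial configuration: it yields $\min_i x_i(T) = \min_i x_i(k-1)$ and, using $\max_i x_i(k-1)=b_0$, also $\max_i x_i(T) - z \le (1-\tfrac1k)(b_0 - z)$. Combining these two facts gives $\max_i x_i(T) - \min_i x_i(T) \le b_0 - \min_i x_i(k-1) - \tfrac1k(b_0-z)$, and substituting the bound $z \le k\min_i x_i(k-1) - (k-1)a_0$ from the gluing step cancels the $\min_i x_i(k-1)$ terms and leaves exactly $(1-\tfrac1k)(b_0-a_0)$, which is the claim.
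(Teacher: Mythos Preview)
Your proposal is correct and follows essentially the same two-phase strategy as the paper: apply Lemmas~\ref{lem:prelim-mu}--\ref{lem:shrink-mu} on $\{0,\dots,k-2\}$, Lemma~\ref{lem:shrink-prelim-M} on $\{k-1,\dots,2k-3\}$, and glue the halves via Lemma~\ref{lem:z-y} applied to $\xx(k-1)$. The only difference is cosmetic bookkeeping: the paper runs a single chain of inequalities by adding and subtracting $z(k\!-\!1)$ and $y(k\!-\!1)$, whereas you first rearrange Lemma~\ref{lem:shrink-mu} into the form $y_0\le k\min_i x_i(k\!-\!1)-(k\!-\!1)a_0$ and substitute at the end; the resulting cancellation is identical.
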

\begin{proof} 
For the sake of compactness, we set
$$\alpha(t) := \min_i x_i(t)\,,\quad \beta(t) := \max_i x_i(t)\,,\quad \gamma := \left(1-\tfrac1k\right),$$
introduce the two sequences
$$y(t) := \max_{i\in N_{\mu(\xx(t))}} x_i(t) \quad \text{and} \quad z(t) := \min_{i\in N_{M(\xx(t))}} x_i(t)\,,$$
and set $R=k\!-\!1$. We have
\begin{align*}
\beta\dkmo&-\alpha\dkmo = \beta\dkmo-z\dkmo+z\dkmo-\alpha\dkmo\\[3pt]
&\le \gamma\big(\beta\kmo\!-\!z\kmo\big)+z\kmo\!-\!\alpha\kmo
\intertext{using Lemma~\ref{lem:shrink-prelim-M} with initial configuration $\xx\kmo$. Then}
& = \gamma\big(\beta\kmo\!-\!y\kmo\big)+\gamma\big(y\kmo\!-\!z\kmo\big) +z\kmo\!-\!\alpha\kmo \\[3pt]
&\le \gamma\big(\beta\kmo\!-\!y\kmo\big)+\big(y\kmo\!-\!z\kmo\big) +z\kmo\!-\!\alpha\kmo
\intertext{since $\gamma<1$ and since  $y\kmo-z\kmo \ge0$ if $n<2k$ by Lemma~\ref{lem:z-y}. Then}
&= \gamma\big(\beta\kmo\!-\!y\kmo\big)+ y\kmo \!-\!\alpha\kmo \\[3pt]
&\le \gamma\big(\beta(0)-y(0)\big)+ \gamma\big(y(0) -\alpha(0)\big) \\[3pt]
& =  \gamma\big(\beta(0)-\alpha(0)\big) 
\end{align*}
{using Lemma \ref{lem:prelim-mu} and \ref{lem:shrink-mu} with initial configuration $\xx(0)$. We have finally obtained}
$\beta\dkmo-\alpha\dkmo \leq \gamma\big(\beta(0)-\alpha(0)\big)$.
\end{proof}


If $n<2k$ and the agent $I(t)$ that updates his opinion at time $t$ is chosen independently and uniformly at random over $V$, then process~\eqref{eq:dyn-special} converges almost surely to a consensus, from any initial configuration.
The almost sure convergence is guaranteed because the finite sequence of updates introduced in the Lemma~\ref{lem:special-sequence} appears infinitely often with probability one. This fact is proved in the following theorem, which provides the desired converge result.
\begin{theorem}\label{theo:asinc-conv-cons}
Let $n,k$ with $1\leq k\leq n$ be given. 
Let $\{I(t), t\ge0\}$ be a sequence of independent and uniformly distributed random variables over $\{1,\ldots,n\}$ and consider dynamics~\eqref{eq:dyn-special}.
If $n<2k$, then 
$$\lim_{t\to\infty} \xx(t) = \1 c \quad \text{ almost surely}$$ 
for any $\xx^0\in \R^n$, with $c \in [\min_i(x_i^0) , \max_i(x_i^0)]$.
\end{theorem}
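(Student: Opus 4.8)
The plan is to read Lemma~\ref{lem:special-sequence} as saying that a suitable (state-dependent) block of $T=2k-2$ updates contracts the ``diameter'' of the configuration by the factor $\gamma:=1-\frac1k<1$, and then to show, by a Borel--Cantelli argument, that under i.i.d.\ uniform selection such a contracting block is realized infinitely often almost surely.

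First I would record two monotonicity facts valid for \emph{every} realization of $\{I(t)\}$: since the update replaces the coordinate $x_{I(t)}$ by the convex combination $\frac1k\sum_{j\in N_{I(t)}}x_j(t)$ of current opinions, no coordinate ever leaves $[\min_i x_i(t),\max_i x_i(t)]$. Hence $t\mapsto\min_i x_i(t)$ is non-decreasing, $t\mapsto\max_i x_i(t)$ is non-increasing, and the diameter $V(t):=\max_i x_i(t)-\min_i x_i(t)$ is non-increasing; all three are bounded, so they converge along every path. Writing $c_-:=\lim_t\min_i x_i(t)$ and $c_+:=\lim_t\max_i x_i(t)$, the theorem reduces to proving $V(t)\to 0$ almost surely: then $c_-=c_+=:c$, every $x_i(t)\in[\min_j x_j(t),\max_j x_j(t)]$ is squeezed to $c$, and $c\in[\min_i x_i^0,\max_i x_i^0]$ because $\min$ is non-decreasing and $\max$ is non-increasing.

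Next comes the deterministic core. Since the dynamics is time-homogeneous, Lemma~\ref{lem:special-sequence} applied with initial configuration $\xx(mT)$ says: there is a word of $T$ agent indices, namely $\mu(\xx(mT+j))$ for $j\in\{0,\dots,k-2\}$ followed by $M(\xx(mT+j))$ for $j\in\{k-1,\dots,2k-3\}$, each of which is a deterministic function of the configuration reached after the previous updates, such that if these are the agents selected at times $mT,\dots,(m+1)T-1$ then $V((m+1)T)\le\gamma\,V(mT)$ (this uses $n<2k$). I would then define $A_m$ to be the event that the realized selections on block $m$ coincide with this online-decoded word; on $A_m$ we get $V((m+1)T)\le\gamma\,V(mT)$, and on $A_m^c$ we still have $V((m+1)T)\le V(mT)$ by the monotonicity above. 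Let $\mathcal F_t:=\sigma(\xx^0,I(0),\dots,I(t-1))$. For every $j$ the index prescribed at time $mT+j$ is $\mathcal F_{mT+j}$-measurable while $I(mT+j)$ is uniform on $V$ and independent of $\mathcal F_{mT+j}$, so iterated conditioning gives $\P(A_m\mid\mathcal F_{mT})=n^{-T}=:p>0$, uniformly in $m$ and in the current configuration; the same iterated conditioning yields $\P\big(\bigcap_{i=0}^{\ell}A_{m+i}^c\big)\le(1-p)^{\ell+1}$ for all $m,\ell$. Letting $\ell\to\infty$ and taking a union over $m$, almost surely $A_m$ occurs for infinitely many $m$; on that event the number of contracting blocks among the first $m$ tends to infinity, so $V(mT)\to 0$, and by monotonicity $V(t)\to 0$, which finishes the argument as reduced above.

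I expect the main obstacle to be precisely the point just made: the ``good'' input supplied by Lemma~\ref{lem:special-sequence} is \emph{state-dependent}, so $A_m$ is not the event that $\{I(t)\}$ equals a fixed word but the event that it equals the word decoded online from the trajectory. The payoff is that its conditional probability is nonetheless exactly $n^{-T}$ irrespective of where the trajectory currently sits, which is what lets the Borel--Cantelli estimate close uniformly over the state; everything else (the two monotonicities, the squeeze, and the convergence of $\min$ and $\max$) is routine.
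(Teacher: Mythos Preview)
Your proposal is correct and follows essentially the same route as the paper: monotonicity of the diameter, Lemma~\ref{lem:special-sequence} supplying a contracting block of length $T=2k-2$, and a Borel--Cantelli argument on disjoint blocks each realized with probability $n^{-T}$. The only cosmetic difference is that the paper asserts the disjoint-block events are independent and invokes the classical second Borel--Cantelli lemma, while you compute $\P(A_m\mid\mathcal F_{mT})=n^{-T}$ by iterated conditioning and derive infinite occurrence from $\P\big(\bigcap_{i=0}^{\ell}A_{m+i}^c\big)\le(1-p)^{\ell+1}$; since a constant conditional probability already yields independence, the two framings coincide, and your explicit handling of the state-dependence of the ``good word'' is if anything a welcome clarification.
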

\begin{proof} Let $\delta(t) = \max_i x_i(t) - \min_i x_i(t)$ and observe that, for any $\xx(0)=\xx^0$ and $\{I(t), t\ge0\}$,
$$\delta(0)\ge0  \quad \text{and} \quad 0\leq \delta(t+1) \leq\delta(t) \text{ for every }t\ge 0\,,$$  because the updates in the dynamics \eqref{eq:dyn-special}, based on model \eqref{eq:model}, involve convex combinations: the element with highest opinion cannot increase it and the element with lowest opinion cannot decrease it.
We introduce the sequence of events $\{A_t, t\ge 2k\!-\!3\}$ with 
\begin{align*} 
A_t = \big\{ &I(s) \!=\! \mu(\xx(s)) \text{ for } s\in \{t\!-\!2k\!+\!3,\ldots, t\!-\!k\!+\!1\} \text{ and } \\
&I(s) \!=\! M(\xx(s)) \text{ for } s\in \{t\!-\!k\!+\!2,\ldots, t\}  \big\}\,,
\end{align*}
i.e. the event $A_t$ is the occurrence of the finite sequence introduced in Lemma~\ref{lem:special-sequence} in the time window $\{t\!-\!(2k\!-\!3),\ldots,t \}$. 
In the same lemma we proved that, 
given the occurrence of $A_t$, 
we have $\delta(t\!+\!1) \leq (1-\frac1k)\,\delta(t\!-\!2k\!+\!3)$. Observe that 
$$0 \leq \lim_{t\to\infty} \delta(t) \leq \lim_{t\to\infty} \left(1-\tfrac1k\right)^{n_t}\delta(0)   $$
where $n_t$ is the number of times $A_t$ occurred up to time $t$. If $\P(A_t \text{ infinitely often})=1$ then $n_t \to \infty $ for $t\to \infty $ and the rightmost limit above is zero almost surely. Hence, $\lim_{t\to\infty} \delta(t)$ almost surely, which implies the convergence to consensus. Moreover, $c \in [\min_i(x_i^0) , \max_i(x_i^0)]$ because every update in \eqref{eq:dyn-special} is a convex combination of a subset of the current opinions. 

It remains to prove $\P(A_t \text{ infinitely often})=1$. The events of the sequence $\{A_t, t\ge 2k-3\}$ are not independent but the events in  the subsequence $\{A_{t_h}, h\ge 1\}$ where $t_h= h(2k-2) -1$ are. Each of these events has probability 
$$\P(A_{t_h}) = \left( \frac1n \right)^{2k-2} \,,$$
thus $\sum_{h=1}^\infty \P(A_{t_h}) = \infty$. 
Hence, 
$\{A_t \text{ i.o.}\} \supset \{A_{t_h} \text{i.o.}\}$.
From the second Borel-Cantelli lemma \cite[Ch.~2, Thm~18.2]{gut2012probability} 
$\P(A_t \text{ infinitely often}) \ge \P(A_{t_h} \text{ infinitely often})=1\,.$
\end{proof}
%
%
%
%
%
%
The result continues to hold for dynamics where $I(t)$ is not uniformly distributed over $\{1,\ldots,n\}$, as long as the probability to sample each agent is constant and positive. 
%
The proof has been based on exhibiting one suitable ``shrinking sequence'': however, it is clear that plenty of other sequences could do the job and actually play a role in inducing convergence of the dynamics. Therefore, the proof does not imply any good estimate of the convergence time. 

%


\section{Conclusion}


In this paper we have introduced a new model of opinion dynamics with opinion-dependent connectivity following the $k$-nearest-neighbors graph. 
The model is motivated by the rise of online social network services, where recommender systems select a certain number of news items to present to users, reducing the number of possible interactions to those which are closer to the user's presumed tastes.
The resulting dynamics is substantially different from comparable models in the literature, such as bounded-confidence models. One key difference is the inherent lack of reciprocity of the interactions, which makes all convergence analysis challenging. Another key difference is the robustness of the formed clusters, whose opinions are hard to sway by external leader nodes. This feature makes control approaches based on leadership, like~\cite{FD-SM-MJ:17}, unsuitable to $k$-nearest-neighbors dynamics.




%
%
%
%

%
%


\end{document}